\newtheorem{theorem}{Theorem}
\newtheorem{lemma}[theorem]{Lemma}
\def\R{{\cal{R}}}
\newcommand{\be}{\begin{equation}}
\newcommand{\ee}{\end{equation}}
\newenvironment{proof}[1][Proof]{\noindent\textbf{#1.} }{\ \rule{0.5em}{0.5em}}
\def\R{{\cal{R}}}
\begin{document}

\vskip -30pt

\title{How much is your Strangle worth?\\ {\large \it{On the relative value of the  $\delta-$Symmetric Strangle under the Black-Scholes  model} }}

\author{Ben Boukai \footnote{bboukai@iupui.edu; Tel: +13172746926; Fax: +13172743460.}\\
Department of Mathematical Sciences, IUPUI, Indianapolis,\\
IN 46202 , USA}

\maketitle

\begin{abstract}
Trading option strangles is a highly popular strategy often used by market participants to mitigate volatility risks in their portfolios. In this paper we propose a measure of the relative value of a delta-Symmetric Strangle and compute it under the standard  Black-Scholes option pricing model. This new measure accounts for the price of the strangle, relative to the Present Value of the spread between the two strikes, all expressed, after a natural re-parameterization,  in terms of delta and a volatility parameter.  We show that under the standard BS option pricing model, this measure of relative value is bounded by a simple function of  delta only and is independent of the time to expiry, the  price of the underlying security or the prevailing volatility used in the pricing model.  We demonstrate how this bound can be used as a quick {\it benchmark} to assess, regardless the market volatility, the duration of the contract  or the price of the underlying security,  the market (relative) value of the  $\delta-$strangle  in comparison to its BS (relative) price. In fact, the explicit and simple expression for this measure and bound allows us to also study in detail the strangle's exit strategy and the corresponding {\it optimal} choice for a value of delta.    


\textit{Keywords}: Call-put parity, option pricing, the Black-Merton-Scholes model, European options 
\end{abstract}

\section{Introduction}

Options, as asset's price derivatives, are the primary tools available to the market participants for hedging their portfolio from directional risk and/or volatility risk. 
The so-called option's delta, which typically is denoted as $\delta$ or $\Delta$,  measures the 'sensitivity' of the option's price to changes in the  price of the  underlying security, is the primary parameter one considers when using an option to mitigate directional risk.  The option's delta is  seen as the hedging ratio and is often also used (near expiration) by market participants as a surrogate to the probability  that the option will expire in the money. 
With standard option pricing model of  Black and Scholes (1973), (abbreviated here as the BS model,  see below), these probabilities are readily available for direct calculations under the governing log-normality assumption of the asset's returns. Roughly speaking, a trader that sells (or buys) a put option at a strike located one standard deviation {\it below} the  current asset's price, ends up with a 16-delta put contract option (i.e. with $\delta=-0.16$). We denote the corresponding strike 
for this 16-delta put contract option as $k_{0.16}^-$. Similarly,   a trader that sells (or buys) a call option at a strike located one standard deviation {\it above} the  current asset's price, ends up with a 16-delta call option (i.e. $\delta=0.16$). We denote the corresponding strike for this 16-delta call contract as $k_{0.16}^+$.  Thus, the corresponding strangle, which is obtained by selling a (negative) 16-delta put option and a (positive) 16-delta call option, is a delta-neutral strategy that is associated, very roughly, with a 0.68  probability for the asset's price to remain between the two strikes, $k_{0.16}^-$ and $k_{0.16}^+$ by expiration, all as resulting from the governing normal distribution assumption. We refer to such a strangle as a {\it   16-delta Symmetric Strangle}, only to indicate the common (absolute) delta value ($\delta=0.16$) of its put and call components.

In a similar fashion we use the term a {\it $\delta-$Symmetric Strangle} to indicate the strangle obtained, for some fixed $\delta\in(0,0.5)$, from  buying (or selling) a $\delta$-units put and call option contracts at the corresponding strikes $k_{\delta}^-$ and $k_{\delta}^+$, respectively. Such a  strangle would be a delta-neutral strategy offering zero directional risk but potentially useful for mitigating volatility risk. We further denote by $\Pi_\delta$ the price of (or the credit received from)  such 
{\it $\delta-$Symmetric Strangle}. In this paper, we study, for a a given $\delta$, the value of this {\it $\delta-$Symmetric Strangle} {relative} to the width of the corresponding spread $(k_{\delta}^+-k_{\delta}^-)$, adjusted for its present value (PV). More precisely, for any  a $\delta \in (0, 0.5)$,  we  define the {\bf relative value} of the corresponding {\it $\delta-$Symmetric Strangle} as
\be\label{1.0}
R_\delta:= \frac{ \Pi_\delta}{PV(k_{\delta}^+-k_{\delta}^-)}.  
\ee

In Section 2, we show  that under the standard BS option pricing model,  the strangle's relative value,  $R_\delta$, is independent of the price of the underlying security and is a function only of  $\delta$ and the prevailing volatility used in the pricing model. In fact, as we will see in Theorem 1 below, for any given $\delta \in (0, 0.5)$, we have $R_\delta\leq \R_\delta$  where
\be\label{1} 
\R_\delta= -\frac{\phi(z_\delta)}{z_\delta} -\delta, 
\ee
and where $\phi(\cdot)$ is the standard normal density ($pdf$), $\phi(u):= \frac{1}{\sqrt{2\pi}}e^{-\frac{u^2}{2}}$, and $z_\delta\equiv \Phi^{-1}(\delta)$, is usual  $\delta^{th}$ percentile of the standard normal distribution, whose cumulative distribution function ($cdf$)  is $\Phi(z):=\int_{-\infty}^z\phi(u)du$.  We point out that since $\delta<0.5$, we have $z_\delta<0$ in expression (\ref{1}) of  $\R_\delta$. 

As an illustration, one quickly finds by utilizing (\ref{1}) that the {\it  $16$-delta Symmetric Strangle} has a relative value of $\R_{0.16}=0.08467$ and that the {\it $30$-delta Symmetric  Strangle} has a relative value of $\R_{0.30}=0.36$. That is to say that under the standard BS option pricing model, one would expect the price of the  {\it  $30$-delta Symmetric Strangle} to be {\it at most} 36\% of the width of the spread between the corresponding strikes, irrespective of the security's  price, or time to expiry, and irrespective of the prevailing volatility. More generally, it follows from Theorem 1, that for a any given $\delta \in (0,\, 0.5)$, the corresponding $\delta-$strangle's price, $\Pi_\delta$, {\it  as calculated under the BS pricing model}, satisfies  
$$
\Pi_{\delta}\leq \R_\delta\times PV(k_{\delta}^+-k_{\delta}^-), 
$$
irrespective of the security's  price, or time to expiry, and irrespective of the prevailing volatility.  In Section 3, we illustrate how this measure $\R_\delta$ in (\ref{1}) may be used as a {\it benchmark} to assess the market pricing (or 'worthiness') of the $\delta-$symmetric strangle compared to its (relative) price,  $R_\delta$,  as suggested by standard BS pricing model. The explicit expression of $\R_\delta$ as is given in (\ref{1})  allow us to also address, in Section 4, the strangle's exit strategy and the corresponding {\it optimal} choice of $\delta$ for it.

\section{Pricing the $\delta$-unit option contract}

One of the most widely celebrated option pricing model for equities (and beyond) is that of Black and Scholes (1973).  Their pricing model is derived under some simple assumptions concerning the distribution of the asset's returns, coupled with presumptive continuous hedging, zero dividend, risk-free interest rate, $r$,  and no cost of carry or transactions fees.  While the aptness of these assumptions has often been criticized (see for example Yalincak (2012)), it has remained as a leading option pricing model for the retail trading practitioner (e.g.: Sinclair (2010)).  However, in its standard form, the BS model  evaluates, for a risky asset with a current market price $\mu$, the price   of an European call option contract at a strike $k$ and $t$ days to expiration as:
\be\label{2}
c_\mu(k)=\mu\times \Phi(d_1(k))-k\cdot e^{-rt}\times\Phi(d_2(k)).
\ee
Here, using the standard notation, 
\be\label{3}
d_1(k):=\frac{\log(\frac{\mu}{k})+(r+\frac{\sigma^2}{2})t}{\sigma\sqrt{t}} \qquad \textnormal{and} \qquad d_2(k):=d_1(k)-\sigma\sqrt{t},  
\ee
 where $\sigma$ denotes the standard deviation of the {{daily}} asset's returns, and $\Phi(\cdot)$ is the standard normal cdf defined above.   The  model for the corresponding price of a put option contract, $p_\mu(k)$,  may be obtain from expression (\ref{2}) of $c_\mu(k)$, by exploiting the so-called {\it put-call parity} which is expressed by the equation
\be\label{4}
\mu-c_\mu(k)=k\cdot e^{-rt} -p_\mu(k),  
\ee
see for example Jiang (2005, Theorem 2.3) or Peskir and Shiryaev (2002) for details. This parity implies that the price of the corresponding put option contract is, 
\be\label{5}
p_\mu(k)=k\cdot e^{-rt}\times\Phi(-d_2(k))-\mu\times [1-\Phi(d_1(k))].
\ee

There is substantial body of literature dealing with the BS option pricing model in (\ref{2})-(\ref{5}), its refinements, its extensions and the so-called, its implied 'Greeks' (i.e. the various partial derivatives of different orders, representing the model's "sensitivities" to changes in its parameters). The interested reader is referred to standard textbooks such as Wilmott, Howison, Dewynne (1995), Hull (2005), Jiang (2005) or Iacus (2011). 

As we already mentioned in the Introduction,  we focus  our  attention here on the option's {\it delta}, which we denote by $\Delta$ as a function with a corresponding value of $\delta\in (0,1)$.  More specifically, while suppressing (for sake of simplicity for now)  from the 
notation $r, t$ and $\sigma^2$, we define for the call and the put contracts options their respective $\Delta$ functions as, $\Delta_c(k):= {\partial c_\mu(k)}/{\partial \mu}$ and $\Delta_p(k):= {\partial p_\mu(k)}/{\partial \mu}$. 
It follows immediately from the put-call parity equation in (\ref{4}) that $\Delta_p(k)=-(1-\Delta_c(k))$. It is well known (see for Example, Jiang (2005)) that for the BS pricing model in (\ref{2}), 
$\Delta_c(k)= \Phi(d_1(k))$, where $d_1(k)$ is given in (\ref{3}), and hence $\Delta_p(k)= -(1-\Phi(d_1(k))\equiv -\Phi(-d_1(k))$.

For its supreme importance to portfolio hedging, the investor/trader often needs to buy (or sell)  an option at a strike, $k$,  which is associated with a {\it specified and desired value} $\delta$ of the option's $\Delta$. For any given $\delta\in (0,1)$, we let  $k^+_\delta$ denote the (unique) solution of the equation $\Delta_c(k^+_\delta)=\delta$, or equivalently the solution of 
\be\label{8}
\Phi(d_1(k^+_\delta))=\delta.
\ee
Accordingly, it follows immediately  that $k^+_\delta$ satisfies  the equation 
\be\label{9}
d_1(k^+_\delta)=\Phi^{-1}(\delta)\equiv z_\delta,
\ee
and hence, by utilizing (\ref{3}) in (\ref{9})  leads to the solution as 
\be\label{10}
k^+_\delta=\mu\cdot e^{-z_\delta \nu + \nu^2/2+r t},
\ee
 where we have substituted  $\nu \equiv \sigma\sqrt{ t}$ throughout. It should be clear from (\ref{10}) that if  $\delta<0.5$, one has $z_\delta<0$ and therefore  $k^+_\delta>\mu$, so that  the corresponding call option is said to be 'out of the money' (OTM). Also, note that it follows from (\ref{3}) and (\ref{9}) that $d_2(k_\delta^+)=d_1(k^+_\delta)-\sigma\sqrt{t}\equiv z_\delta-\nu$, so that 
\be\label{11}
\Phi(d_2(k_\delta^+))= \Phi(z_\delta-\nu)
\ee
in (\ref{2}).  Indeed, with the re-parameterization by $(\delta, \, \nu)$ (with $\nu\equiv \sigma \sqrt{t}$), of the BS option pricing model in (\ref{2}), we may re-express, upon using  the matching expressions (\ref{8})-(\ref{11}) in equation (\ref{2}),  the {current} price of a {$\delta-$unit} call option  in a much simpler form  as
\be
\begin{aligned}\label{12}
c_\mu(\delta, \, \nu)\equiv  c_\mu(k^+_\delta)=&  \mu \times \delta -k^+_\delta \cdot e^{-rt}\times \Phi(z_\delta-\nu)\\ 
= &  \mu \times \left[\delta -e^{-z_\delta \nu +\nu^2/2} \times \Phi(z_\delta-\nu)\right],\\ 
\end{aligned}
\ee
for any $\delta\in (0,1)$ and with $\nu>0$. 
\smallskip

\noindent{\bf Remark 1:\ } {\it 
Note in passing that  in practice, the option's $\delta$ is often used as a crude approximation to the probability the option will end in the money,  $Pr(ITM)$, which by (\ref{11}), (\ref{12}) is equal to $\Phi(d_2(k^+_\delta))\equiv \Phi(z_\delta-\nu)$. However, since $\nu\equiv \sigma\sqrt{t}>0$, it immediately follows that $\Phi(z_\delta -\nu))\leq \Phi(z_\delta))\equiv \delta$. Hence, for any $\delta\in (0,1)$ and $\nu>0$, $Pr(ITM)\leq \delta$ and only near expiration, as $lim_{t\to 0}Pr(ITM)=\delta$, it holds. 
}

Similarly to (\ref{12}), we calculate the {current} price of the $\delta-$unit put contract option by using the put-call parity equation in (\ref{4}), and by noting that by (\ref{8}) the corresponding $k^-_\delta$ strike for the put contract is the same as the strike $k^+_{1-\delta}$ of the $(1-\delta)-$unit call option contract, so that $k^-_\delta\equiv k^+_{1-\delta}$.  Accordingly,  since $z_\delta\equiv -z_{1-\delta}$,  we obtain from (\ref{10}) that 
\be\label{13}
k^-_\delta=\mu\cdot e^{z_\delta \nu + \nu^2/2+r t}.
\ee
Hence, it follows immediately from (\ref{4}) and (\ref{13}), that under the $(\delta, \, \nu)$ re-parameterization,   the expression for the {current} price of the $\delta-$unit put option is,  
\be
\begin{aligned}\label{14}
p_\mu(\delta, \, \nu)\equiv p_\mu(k^-_\delta)= & -\delta \mu +k^-_{\delta}\cdot e^{-rt}\times(1- \Phi(z_{1-\delta}-\nu))\\
 = &  -\mu\times\left[\delta -e^{z_\delta \nu +\nu^2/2} \times \Phi(z_\delta+\nu)\right].\\
\end{aligned}
\ee

{It should be clear from (\ref{13}) that if  $\delta<0.5$ and $r=0$, one has $z_\delta<0$ and therefore  $k^-_\delta<\mu$ only if $\nu< -2\cdot z_\delta$, in which case, the corresponding put option is said to be 'out of the money' (OTM). Hence, we will restrict our attention to the practical case of the above parametrization in which $(\delta, \nu)$ are such that $k^-_{\delta}<\mu < k^+_\delta$, or alternatively, $(\delta, \nu)\in {\cal B}$, where 
$$
{\cal B}=\{(\delta, \nu); \, \, \delta>0, \ \& \ \nu>0,  \ \ \text{s.t.}\ \ \delta<\Phi({-\nu}/{2}) \}.
$$
We further point out that the two strikes, $k^+_{\delta}$ and $k^-_\delta, (\equiv k^+_{1-\delta})$,  need not be symmetrical with respect of the current price $\mu$ of the underlying security (i.e.: $\mu-k^-_{\delta}\neq k^+_{\delta}-\mu$). It is well-known that the occasional observed asymmetry of these equal $\delta-$ units strikes is a fixture of the {\it skew} in the volatility surface that is affecting the option pricing model, see for example Gatheral (2006), or Doran and Krieger (2010).  
}

\section{The relative value of the $\delta$-Symmetric Strangle}

Consider now a trader that desires to simultaneously sells (say), at some given level of $\delta<0.5$, the $\delta-$unit put and the $\delta-$unit call contracts so as to form the 'OTM' {\it $\delta-$Symmetric Strangle} strategy. The total  selling price of this 
{strangle} as calculated under the BS pricing model,  is therefore  $\Pi_\delta=c_\mu(k^+_\delta)+p_\mu(k^-_\delta)$. As a measure for assessing the 'worthiness' of this strangle, we consider the 'value' of the selling price, $\Pi_\delta$, relative to the present value of the spread between the strikes, namely, $PV(k^+_{\delta}-k^-_\delta)=(k^+_{\delta}-k^-_\delta)\times e^{-rt}$. We express this relative value measure in (\ref{1.0}) as

\begin{equation}\label{15.001}
R(\delta, \, \nu):= \frac{\Pi_\delta}{PV(k^+_{\delta}-k^-_\delta)} =  \frac{c_\mu(\delta, \nu)+p_\mu(\delta, \nu)}{(k^+_{\delta}-k^-_\delta)\times e^{-rt}}.
\end{equation}
\smallskip

Note that by its definition, $R(\delta, \, \nu)\geq 0$ for all $\delta\in (0,0.5)$ and $\nu>0$,  in particular over ${\cal B}$.  {Further,  since an European option price is (linearly) homogeneous in  $\mu$,  and in the strike, $k$, (see Theorem 6 of Merton (1973)),  the ratio $R(\delta, \nu)$ in (\ref{15.001}),  is independent of the current price,   $\mu$, of the underlying security. Also note that since we account in  (\ref{15.001}) for the present value of the spread between the strikes, this quotient is, by construction, also independent of  the risk-free interest rate,  $r$.}
This can fully realized by substituting  expressions (\ref{10}), (\ref{12}), (\ref{13}) and (\ref{14}) in $R(\delta, \, \nu)$ and simplifying the resulting terms, to obtain, for each $\delta<0.5$ and $\nu\equiv \sigma\sqrt{t}>0$,  the expression, 
\be\label{15}
R(\delta, \, \nu)=  \frac{e^{z_\delta \nu}\cdot \Phi(z_\delta+\nu)- e^{-z_\delta \nu }\cdot \Phi(z_\delta-\nu)}{e^{-z_\delta \nu }-e^{z_\delta \nu }} ,
\ee
for the  relative value of the {\it $\delta-$Symmetric Strangle}  {under the BS option pricing model.}  {We point that the values of $R(\delta, \, \nu)$ in  (\ref{15})  are straightforward to calculate for any $(\delta, \nu)$}. Figure 1 below provides the graph of $R(\delta, \, \nu)$ for various values of $(\delta, \nu)$, with $0<\delta<0.5$ and $0<\nu<1$, where $\nu=\sigma\sqrt{t}$  representing realistic values for $t$ (the time in days to expiry) and the model's daily (implied or historical) volatility, $\sigma$.  In any case, the properties of $R(\delta, \, \nu)$, as a function of $\delta$ and $\nu$ (in ${\cal B}$)  are of interest. In Appendix A below we show that for a fixed $\delta<0.5$, $R(\delta, \nu)$ is monotonically non-increasing function of $\nu$ (with $\partial R/\partial\nu \leq 0$) and that for a fixed $\nu>0$, $R(\delta, \nu)$ is monotonically increasing function of $\delta$ (with  $\partial R/\partial\delta > 0$).  

\begin{figure}[h] 
  \centering
  \includegraphics[width=4in,height=3in,keepaspectratio]{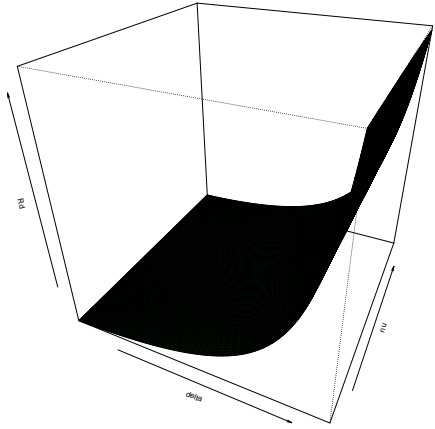}
  \caption{\small{\it The Relative Value function $R(\delta, \nu)$ of the {$\delta-$Symmetric Strangle} for $\delta<0.5$ and $\nu\in (0,1)$.}}
  \label{fig:fig1}
\end{figure}

\begin{theorem} Under the BS model and irrespective of the current price, $\mu$,  of the underlying security,  the current risk-free interest rate, $r$, and irrespective of the time to expiry, $t$, and the presumed volatility (either implied or historical), the upper bound to the relative value $R(\delta, \nu)$, of the OTM {\it $\delta-$Symmetric Strangle} with $\delta\in (0, 0.5)$, depends only on $\delta$ and is given by, 
$0<R(\delta, \nu)\leq  \R_\delta$, where  
\be\label{15.1}
\R_\delta:=\lim_{\nu\to 0} R(\delta, \nu)=-\frac{\phi(z_\delta)}{z_\delta} -\delta. 
\ee
Moreover,  $\lim_{\delta\to 0}\R_\delta=0$, and for all $\delta\in (0,0.5)$,
\be\label{15.2}
\R^\prime_\delta:=\frac{d}{d\delta}\R_\delta=\frac{1}{z^2_\delta}>0. 
\ee

\end{theorem}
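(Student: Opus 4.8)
The plan is to establish the four assertions of the theorem in turn. First, for the limit~(\ref{15.1}): starting from the closed form~(\ref{15}) for $R(\delta,\nu)$, I would observe that as $\nu\to0^+$ both numerator and denominator vanish — the denominator is $e^{-z_\delta\nu}-e^{z_\delta\nu}=-2\sinh(z_\delta\nu)\to0$, and the numerator equals $\Phi(z_\delta)-\Phi(z_\delta)=0$ at $\nu=0$ — so the limit has the indeterminate form $0/0$ and L'Hôpital's rule in $\nu$ applies. Differentiating the denominator gives $-2z_\delta\cosh(z_\delta\nu)\to-2z_\delta\ne0$ (recall $z_\delta<0$). Differentiating the numerator, using $\frac{d}{d\nu}\Phi(z_\delta\pm\nu)=\pm\phi(z_\delta\pm\nu)$, and evaluating at $\nu=0$ yields $2z_\delta\Phi(z_\delta)+2\phi(z_\delta)=2\delta z_\delta+2\phi(z_\delta)$ (using $\Phi(z_\delta)=\delta$). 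The quotient of the two limits is exactly $-\delta-\phi(z_\delta)/z_\delta=\R_\delta$. (A second-order Taylor expansion of numerator and denominator about $\nu=0$ gives the same value; the only point needing care is keeping the signs on the $\phi(z_\delta\pm\nu)$ terms correct.)

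For the bound $0<R(\delta,\nu)\le\R_\delta$: positivity was already noted below~(\ref{15.001}), since the strangle price $\Pi_\delta=c_\mu(k^+_\delta)+p_\mu(k^-_\delta)$ is strictly positive and, on ${\cal B}$ with $\delta<0.5$, one has $k^+_\delta>k^-_\delta$, so $PV(k^+_\delta-k^-_\delta)>0$. For the upper bound I would invoke the monotonicity established in Appendix~A, namely that for each fixed $\delta\in(0,0.5)$ the map $\nu\mapsto R(\delta,\nu)$ is non-increasing on $\nu>0$. Since $R(\delta,\cdot)$ is continuous there and, by the previous step, has right-limit $\R_\delta$ at $0$, monotonicity gives $R(\delta,\nu)\le\lim_{\tilde\nu\to0^+}R(\delta,\tilde\nu)=\R_\delta$ for every admissible $\nu>0$; I would note in passing that the constraint $\delta<\Phi(-\nu/2)$ defining ${\cal B}$ relaxes to $\delta<\tfrac12$ as $\nu\to0^+$, so this right-limit is genuinely taken within the admissible region.

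For the behavior near $\delta=0$: as $\delta\to0^+$ we have $z_\delta=\Phi^{-1}(\delta)\to-\infty$, so $\phi(z_\delta)\to0$ while $|z_\delta|\to\infty$; hence $\phi(z_\delta)/z_\delta\to0$, and together with $\delta\to0$ this gives $\R_\delta\to0$. For~(\ref{15.2}), I would differentiate $\R_\delta=-\phi(z_\delta)/z_\delta-\delta$ using the inverse-function rule $dz_\delta/d\delta=1/\phi(z_\delta)$ together with the identity $\phi'(z)=-z\phi(z)$:
$$\frac{d}{d\delta}\!\left(-\frac{\phi(z_\delta)}{z_\delta}\right)=\left(-\frac{\phi'(z_\delta)}{z_\delta}+\frac{\phi(z_\delta)}{z_\delta^2}\right)\frac{1}{\phi(z_\delta)}=\left(\phi(z_\delta)+\frac{\phi(z_\delta)}{z_\delta^2}\right)\frac{1}{\phi(z_\delta)}=1+\frac{1}{z_\delta^2},$$
so subtracting $\frac{d}{d\delta}\delta=1$ leaves $\R^\prime_\delta=1/z_\delta^2>0$, since $z_\delta\ne0$.

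Granting the Appendix~A monotonicity of $\nu\mapsto R(\delta,\nu)$, no step here is substantial: the only genuinely delicate point is the $0/0$ evaluation of~(\ref{15}) — differentiating its numerator with the correct signs and confirming the denominator's derivative is nonzero — together with the minor check that the region ${\cal B}$ does not obstruct the passage $\nu\to0^+$. The real work, the sign of $\partial R/\partial\nu$ (which is what actually makes the bound hold), is precisely what is deferred to Appendix~A.
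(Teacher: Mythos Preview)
Your proposal is correct and follows essentially the same approach as the paper: L'H\^opital in $\nu$ for the limit~(\ref{15.1}), the Appendix~A monotonicity for the upper bound, and the chain rule with $dz_\delta/d\delta=1/\phi(z_\delta)$ and $\phi'(z)=-z\phi(z)$ for~(\ref{15.2}). Your treatment of $\lim_{\delta\to0}\R_\delta$ via the direct observation that $\phi(z_\delta)/z_\delta$ has the form $0/(-\infty)$ is in fact cleaner than the paper's second invocation of L'H\^opital, and your explicit verification that the admissible region ${\cal B}$ does not obstruct the passage $\nu\to0^+$ is a useful addition the paper leaves implicit.
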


\begin{proof} That $R(\delta, \nu)$ is a monotonically decreasing function of $\nu$ for each fixed $\delta \in (0,0.5)$ is seen by direct calculation, $\partial R(\delta, \nu)/\partial \nu\leq 0$ (see Lemma \ref{lm3}, below). The results stated in (\ref{15.1}) follow immediately by a straightforward application of L'Hopital's rule to the numerator and denominator that comprise expression ({\ref{15}) of  $R(\delta, \nu)$ and noting that it trivially also independent of $\mu$ and $r$ by construction. By another direct application of L'Hopital's rule to the quotient 
$\phi(z_\delta)/z_\delta$ along with the facts that $\frac{d}{d\delta}\phi(z_\delta)=-z_\delta\phi(z_\delta)z^\prime_\delta$ and $z^\prime_\delta=1/\phi(z_\delta)$ leads to the second assertion as well as to the result stated in (\ref{15.2}). }\end{proof}

The results of Theorem 1 and the bound  $\R_\delta$ in (\ref{15.1}) provide  a {\it benchmark} for assessing the value, in relative terms, of a {\it $\delta-$Symmetric Strangle} under the BS option pricing model in (\ref{2})-(\ref{5}), as applicable to any security (i.e. independent of the current underlying security price $\mu$), to any expiry (independent of $t$), and under any presumed volatility (independent of $\sigma$).  In fact, if $\hat R_\delta$ denotes the market (relative) value of a {\it $\delta-$Symmetric Strangle} (i.e. the market version of (\ref{1.0})), then, this strangle would be deemed  {\it 'well-priced'}  compared to its (relative) price under the BS option pricing model, as long as $\hat R_\delta\geq \R_\delta$.  In Figure 2 below, we graph the values of this function,   $\R_\delta$ (in (\ref{15.1}) or (\ref{1})) for all $0<\delta <0.5$.

\begin{figure}[h] 
  \centering
  \includegraphics[width=3.5in,height=2.5in,keepaspectratio]{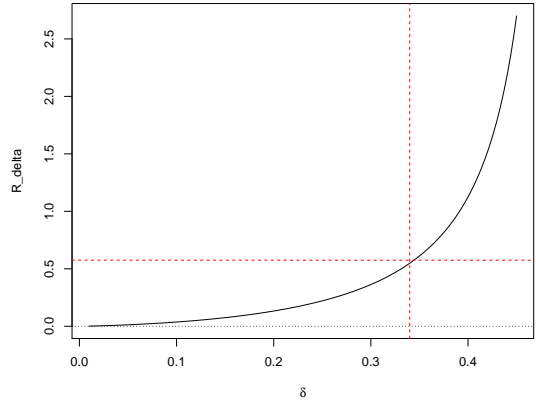}
  \caption{\small{\it The relative value $\R_\delta$ as a function of $\delta$. Marked in {red} is the market relative value (current, as of EOD, May, 13th, 2020), $\hat R_{0.34}=0.575$,  of a 34-delta symmetric strangle with strikes \$112 and \$120 in IBM (see Example 1 and Table 1 for more details.)}}
  \label{fig:fig2}
\end{figure}

\noindent{\bf Remark 2:\ }{\it The results stated in Theorem 1 and their derivations are valid in the BS 'world', in which the distribution of the asset's returns assumed to have a constant variability throughout and do not take into account the volatility 'skew' or 'smile' that is often being observed by the traders across the discretized options' grid equipped with bid-ask price spreads.  {It surely implies that the BS pricing model (with all it inputs) undervalues the $\delta$-Symmetric Strangle, whenever $\R_\delta <\hat R_\delta$, where  $\hat R_\delta$ is its market (relative) value (i.e. the market version of (\ref{1.0})). }}

\vskip 5pt

\noindent{\bf Example 1:\ }{ 
As an illustration of it's usage, consider the market EOD (end of day) market pricing of IBM (International Business Machine Corp.) as of May 13th, 2020. We find  that the $34-$delta symmetric strangle for the June 5th, 2020 expiration with the strikes of $k_1=\$112$ and $k_2=\$120$ for the sold put and call,  respectively, has a market mid-price of $\hat \Pi_{0.34}=\$4.60$ (along with current ticker price of $\mu=$\$115.73, with $t=23$ days to expiration, and $IV=38.32\%$ (average) implied volatility, so that $\sigma=IV/\sqrt{365}=0.0200576$).   This results with a market relative value of 
$$
\hat R_{0.34}:= \frac{\hat \Pi_{0.34}}{(k_2-k_1)}= \frac{4.60}{(120-112)}=0.575, 
$$
for this 34-delta  Strangle in IBM, whereas, by using (\ref{1}), we calculate under the BS pricing model a relative value of $\R_{0.34}=0.548$ for this 34-delta strangle. Thus, the BS pricing model (with its constant variance assumption, etc.)  under-values this strangle (in relative terms) as compared to its actual market value. Similar result is obtained with the relative value of a 21-delta strangle with 100 days to expiration in BA (Boeing Co.), which yields $\hat R_{0.21}=0.156$ as compared to $\R_{0.21}=0.147$, see Table 1 below.  Also included in this table the the market pricing of $\delta-$ symmetric strangles for additional securities,  with different $\delta$, underlying prices, $IV$ and days to expiration.   {In all cases listed in the Table, the market (relative) value $\hat R_\delta$ exceeded that of the corresponding BS (relative) value $\R_\delta$. Thus, in these noted cases, the BS pricing model (with all its inputs) appears to undervalue the strangles (in relative terms) as compared to their market (relative) value.} The reader is invited to check the validity of Theorem 1  results and the applicability of the bound $\R_\delta$ in (\ref{15.1})  as a benchmark for the market pricing (in relative value) of a  $\delta$-Symmetric Strangle with any other traded security options at any expiration.

\begin{table}[h]
\begin{center}
\caption{Oserved market relative values $\hat R_\delta$ of the $\delta-$Symmetric Strangle for various tickers and durations as were priced on EOD$^*$, May 13, 2020,  as compared to the bound  $\R_\delta$ (\ref{15.1}) calculated under the BS option pricing model.}
{\small
\begin{tabular}{ccccccccccc}
\hline
& $Ticker$ &  $\mu$ 	&  $IV$	& $Days$ & $\delta$ & $k_1$  & $k_2$ & $\hat \Pi_\delta$-Price & $\hat R_\delta$ & $\R_\delta$\  \\ \hline
& SPY	 & 281.60 	&0.3529	&	37 &	0.170	& 250 & 302	 &	5.19	&	0.107	& 0.095 \  \\ 
 & LLY	 & 157.93	& 0.3597&	156 &	0.200	& 130 & 185	 &	8.22 	&	0.150	& 0.133 \  \\
& BA	 & 121.50 	& 0.7685&	100 &	0.210	& 95 & 175	 &	12.45 	&	0.156	& 0.147 \  \\ 
& TLT	 & 168.50	& 0.2029&	16 &	0.255	& 162 & 170	 &	1.97 	&	0.246	& 0.232\  \\ 
& C	 & 40.60	& 0.6851&	219 &	0.295	& 32& 52.5	 &	7.05 	&	0.362	& 0.345 \  \\ 
& IBM	 & 115.73  	&0.3832	&	23 &	0.340	& 112 & 120	 &	4.60	&	0.575	& 0.548 \  \\ 
& GOOG	 & 1349.33	& 0.3356&	65 &	0.405	& 1320& 1400	 &	102.65 	&	1.283	& 1.207 \  \\ 
\hline
\end{tabular}
}
\end{center}
\vskip -10pt
\qquad \small{$^*$EOD market pricing were obtained using the TOS platform of TDAmeritrade} 
\end{table}

 {The empirical (market) results exhibited in Table 1, illustrate the tremendous practical and strategy implications the results of Theorem 1 have. With such  theoretical results at hand, retail traders and and market participants  are now able to quickly assess whether or not the  strangle they buy (or sell) is overpriced or under-priced in the market as compared to its Black-Scholes price, regardless the market volatility, the duration of the contract  or the underlying assest's price . It provides for a common \textit{benchmark} for assessing and comparing the market (relative) pricing of a major trading strategy (namely the $\delta-$Symmetric Strangle) across various securities and assets, across various durations and irrespective of the underlying security-specific volatility (implied or historical).}

\section{Strategizing}
   
One of the appealing aspects of a {\it $\delta-$Symmetric Strangle} is that from the outset, it is a  delta-neutral strategy with zero directional risk, initially. Moreover a trader that sells such a strangle, for some fixed $\delta<0.5$, at the matching two strikes $k^-_\delta$ and $k^+_\delta$, benefit from a well defined probability of success, that may be calculated under the {\it current}  distribution of the asset's returns  implied by BS option pricing model in (\ref{2}) and (\ref{5}).  Specifically, for a given value $\delta<0.5$ and $\nu>0$, the {\it initial } probability that the underlying security price would remain, at expiration, between $k^-_{\delta}, \ (\equiv k^+_{1-\delta})$ and $k^+_{\delta}$ is simply (see Remark 1),  
\be\label{16}
\alpha \equiv \Phi(-z_{\delta}-\nu)-\Phi(z_{\delta}-\nu). 
\ee
Hence, the expected reward for a trader that sells the strangle for $\Pi_\delta=c_\mu(k^+_\delta)+p_\mu(k^-_\delta)$ (as credit) and plans to exit and buy it  back for a fraction $\lambda\in (0,1]$ of the credit received is
$$
E_\lambda(\delta):=\alpha\Pi_\delta-(1-\alpha)\lambda \Pi_\delta. 
$$
In relative terms, this expected reward, relative to the present value of the spread between the strikes, becomes 
\be\label{17}
{\bar E}_\lambda(\delta):= \frac{E_\lambda(\delta)}{PV(k^+_{\delta}-k^-_{\delta})}\leq \alpha\R_\delta-(1-\alpha)\lambda \R_\delta, 
\ee
where $\R_\delta$ is given in (\ref{1}). As was mentioned in the Introduction and pointed out in Remark 1, for small values of $\nu$ (i.e. near expiration) we may approximate the {\it 'success'} probability in (\ref{16}) as
$\alpha\approx (1-2\delta)$. Accordingly, for any given fractional loss $\lambda\in(0,1]$ the expected relative reward in (\ref{17}), under this approximation would be,
\be\label{18}
{\cal E}_\lambda(\delta)=(1-2\delta(1+\lambda))\times\R_\delta. 
\ee

\begin{figure}[h] 
  \centering
 \includegraphics[width=3.75in,height=3in,keepaspectratio]{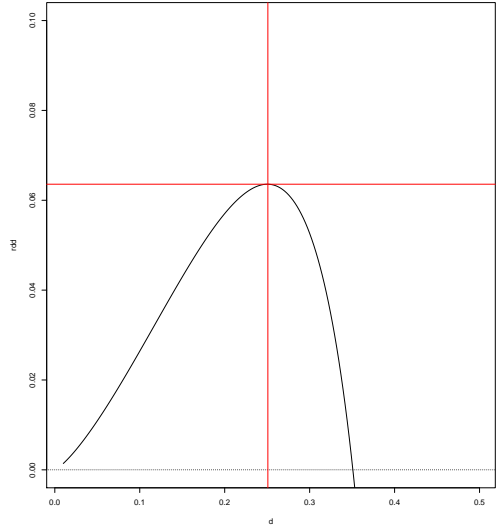}
  \caption{\small {\it The expected relative reward function, ${\cal E}_\lambda(\delta)$ as a function of $\delta$ with $\lambda=0.5$. The maximal value is achieved at $\delta^*=0.2336$ at which point, ${\cal E}_\lambda(\delta^*)=0.05615$.  }}
  \label{fig:fig3}
\end{figure}

Observe that ${\cal E}_\lambda(\delta)\geq 0$ as long as $\delta\leq 1/2(1+\lambda)$ and that,  upon using (\ref{15.1}) and (\ref{15.2}), the equation 
$$
{\cal E}^\prime_\lambda(\delta)\equiv -2(1+\lambda)\R_\delta+(1-2\delta(1+\lambda))\R^\prime_\delta=0,
$$
is seen to have a unique root,  $\delta^*$,  at which point ${\cal E}_\lambda(\delta)$ attains its maximal value.  That is, for a given fractional lose $\lambda\in (0, 1]$, this root $\delta^*=h(\lambda)$, must solves the equation 
\be\label{19}
\delta(1-z_\delta^2)-z_\delta\phi(z_\delta)=\frac{1}{2(1+\lambda)},
\ee
at which point ${\cal E}^*_\lambda:={\cal E}_\lambda(\delta^*)\geq {\cal E}_\lambda(\delta)$.

In Figure 3 above we present the graph of  the relative reward function,  ${\cal E}_\lambda(\delta)$,  for a trader who sells a $\delta-$Symmetric Strangle, and wishes, as a matter of strategy,  to exit it upon a loss of 50\% of the credit received. This case corresponds to $\lambda=0.5$ and results with an optimal choice for $\delta$ of   $\delta^*=0.2336$ for this strategy to yield a maximal expected relative reward of ${\cal E}^*_{0.5}=0.05615$.

\begin{table}[h]
\begin{center}{\small 
\caption{The optimal choice for $\delta$ for the 
 $\delta-$ Symmetric Strangle strategy, calculated  for  'exits'  with the various fractional loss $\lambda$.}

\vskip 8pt

\begin{tabular}{ccccc}
\hline
&  $\lambda$ & $\delta^*$ & ${\cal E}^*$ & $\alpha(\delta^*)$ \  \\ \hline
& 0.25	 &	0.300 &	0.091	& 0.400 \  \\ 
& 0.40	  &	0.256 &	0.067	& 0.489 \  \\ 
& 0.50	&	0.234&	0.056	& 0.533	\  \\ 
& 0.60	&        0.216 &	0.048	& 0.567	\  \\ 
& 0.75	&	0.194&	0.040	& 0.611\  \\ 
& 1.00	&	0.164&	0.031	& 0.670\  \\ \hline

\end{tabular}
}
\end{center}
\end{table}

In Table 2 below, we provide the 'optimal' values for $\delta$ as were calculated (as a numerical solution of (\ref{19})) for various choices of $\lambda$, along with the corresponding values of the maximal expected reward ${\cal E}^*_\lambda$, and the matching initial probability of "success" of this $\delta^*-$ Symmetric Strangle strategy. As can be seen from Table 1, the selling of a 'standard' 16-delta symmetric strangle with its $0.68$ 'success' probability should be coupled with an exit strategy  that limits losses  at 100\% of the credit received may yield a maximal expected relative reward of ${\cal E}^*_{1}=0.031$.  In contrast, the selling of a 30-delta symmetric strangle 
of the lesser 'success' probability (of 0.4) should be coupled with an exit strategy  that limits losses  at 25\% of the credit received, but may triple the maximal expected  reward to ${\cal E}^*_{0.25}=0.091$

\section{Appendix A}
In this appendix we study the coordinate-wise behavior of $R(\delta, \nu)$ as given in (\ref{15}) over the practical domain ${\cal B}$. To begin with, note first that since
$$
e^{z_\delta \nu+\nu^2/2}\equiv \frac{\phi(z_\delta)}{\phi(z_\delta+\nu)}\qquad \text{and}\qquad e^{-z_\delta \nu+\nu^2/2}\equiv \frac{\phi(z_\delta)}{\phi(z_\delta-\nu)},
$$
we may express  $R(\delta, \, \nu)$ in (\ref{15}), entirely in terms of the standard normal $pdf$ and $cdf$, as
\be\label{15a}
R(\delta, \, \nu)=  \frac{\phi(z_\delta-\nu)\cdot \Phi(z_\delta+\nu)- \phi(z_\delta+\nu)\cdot \Phi(z_\delta-\nu)}{\phi(z_\delta+\nu)-\phi(z_\delta-\nu)}.
\ee
Upon differentiating expression  (\ref{15a}) of $R(\delta, \nu)$, with respect to $\delta$ and with respect to $\nu$ along with the fact that  $\phi^\prime(u):=\frac{d}{du}\phi(u)=-u\phi(u)$ we obtain the following results. 

\begin{lemma}\label{lm3}{With $R(\delta, \, \nu)$ as defined in (\ref{15}) above we have,
\begin{itemize}
\item[a)] For each fixed $\delta<0.5$ (so that $z_\delta<0$), 
$
\frac{\partial R}{\partial \nu}=\frac{2A}{B^2}(B+z_\delta \cdot D )\leq 0, 
$
\item[b)] For each fixed $\nu>0$,
$
\frac{\partial R}{\partial \delta}=-(z_\delta-\nu)\cdot A\cdot D>0, 
$
\end{itemize}
where, $ D:= (\Phi(z_\delta+\nu)-\Phi(z_\delta-\nu))>0$, $B:= (\phi(z_\delta+\nu)-\phi(z_\delta-\nu))>0$, and  $A:=(\phi(z_\delta+\nu)-\phi(z_\delta-\nu))>0$. 
}
\end{lemma}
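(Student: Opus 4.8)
The plan is to prove both parts by direct differentiation of the closed form (\ref{15a}), writing $R=N/B$ with numerator $N:=\phi(z_\delta-\nu)\Phi(z_\delta+\nu)-\phi(z_\delta+\nu)\Phi(z_\delta-\nu)$ and denominator $B=\phi(z_\delta+\nu)-\phi(z_\delta-\nu)$, and repeatedly invoking the two elementary facts $\phi'(u)=-u\phi(u)$ and $\Phi'(u)=\phi(u)$, together with $z_\delta'=dz_\delta/d\delta=1/\phi(z_\delta)$. Before starting I would record the two sign facts on which everything rests (writing $\phi_\pm:=\phi(z_\delta\pm\nu)$, $\Phi_\pm:=\Phi(z_\delta\pm\nu)$): since $\Phi$ is strictly increasing, $D=\Phi_+-\Phi_->0$; and since for $z_\delta<0,\ \nu>0$ one has $|z_\delta-\nu|=|z_\delta|+\nu>\big||z_\delta|-\nu\big|=|z_\delta+\nu|$ while $\phi$ decreases in $|u|$, also $B>0$. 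I read the symbol $A$ in the statement as the product $A:=\phi_+\phi_->0$; the printed difference would make $A$ coincide with $B$, whereas this product is exactly the positive common factor produced by the differentiation in part (a).

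For part (a) I would apply the quotient rule to $R=N/B$ in $\nu$, using $\partial_\nu\phi(z_\delta\pm\nu)=\mp(z_\delta\pm\nu)\phi(z_\delta\pm\nu)$ and $\partial_\nu\Phi(z_\delta\pm\nu)=\pm\phi(z_\delta\pm\nu)$. Expanding $N_\nu B-NB_\nu$ into monomials in $\phi_\pm,\Phi_\pm$, the $\phi_\pm^2\Phi_\mp$ contributions cancel in pairs and the remainder factors as $2\phi_+\phi_-\big[z_\delta D+B\big]$, so that $\partial R/\partial\nu=\frac{2A}{B^2}(B+z_\delta D)$. Since $A>0$ and $B^2>0$, the sign is governed entirely by $B+z_\delta D$, and establishing $B+z_\delta D\le 0$ is the crux.

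This scalar inequality $B+z_\delta D\le 0$ (for $z_\delta<0,\ \nu>0$) is the step I expect to be the main obstacle, and I would not attempt to bound $\phi$ and $\Phi$ directly. Instead I would use a boundary/monotonicity argument: fix $\delta$ (hence $z_\delta$) and set $F(\nu):=B+z_\delta D=\phi_+-\phi_-+z_\delta(\Phi_+-\Phi_-)$. Then $F(0)=0$, and the same derivative rules give the clean cancellation $F'(\nu)=-(z_\delta+\nu)\phi_+-(z_\delta-\nu)\phi_-+z_\delta(\phi_++\phi_-)=-\nu\,B$. Because $B>0$ and $\nu>0$, we get $F'(\nu)<0$, hence $F(\nu)<F(0)=0$ for every $\nu>0$; combined with the previous paragraph this yields $\partial R/\partial\nu\le 0$, with equality only at $\nu=0$.

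For part (b) I would run the identical quotient-rule computation but differentiate in $\delta$, inserting the chain-rule factor $z_\delta'=1/\phi(z_\delta)$ into each term; the bookkeeping is parallel and the analogous cancellation occurs, leaving a numerator equal to a positive multiple of $\phi_+\phi_-\,D$ (explicitly $\frac{2\nu}{\phi(z_\delta)}\,AD$ after dividing by $B^2$). Hence $\partial R/\partial\delta>0$, since $\nu>0$, $A>0$, $D>0$ and $\phi(z_\delta)>0$. This agrees with the sign asserted in the statement — the printed factor $-(z_\delta-\nu)=\nu-z_\delta$ is itself positive because $z_\delta<0<\nu$ — so in either bookkeeping the conclusion $\partial R/\partial\delta>0$ is manifest. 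Apart from this sign, parts (a) and (b) are mechanically the same; the only genuinely non-routine ingredient is the auxiliary-function identity $F'(\nu)=-\nu B$, which converts the opaque inequality $B+z_\delta D\le 0$ into a one-line monotonicity statement.
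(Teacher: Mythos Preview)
Your approach is the same as the paper's in spirit: differentiate the closed form (\ref{15a}) via the quotient rule, using $\phi'(u)=-u\phi(u)$. The paper's proof, however, omits all computation and simply asserts that $(z_\delta-\nu)<0$ and $B+z_\delta D\le 0$ hold over $\mathcal{B}$; your proposal actually supplies the missing argument for the latter. The auxiliary-function device --- setting $F(\nu)=B+z_\delta D$, noting $F(0)=0$, and computing $F'(\nu)=-\nu B<0$ --- is a clean proof of the one inequality the paper leaves unjustified, and it in fact works for all $z_\delta<0$, $\nu>0$, not merely on $\mathcal{B}$. Your reading of $A$ as the product $\phi_+\phi_-$ (rather than the printed difference, which would make $A$ coincide with $B$) is the correct interpretation: that product is exactly the common factor left after the cancellations in part (a). For part (b), your computed expression $\partial R/\partial\delta = 2\nu AD\big/\!\big(\phi(z_\delta)\,B^2\big)$ is what the quotient-rule calculation actually yields; the paper's printed coefficient $-(z_\delta-\nu)$ and the missing $B^{-2}$, $\phi(z_\delta)^{-1}$ factors appear to be misprints, but both expressions are manifestly positive, so the stated monotonicity in $\delta$ is unaffected.
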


\begin{proof} The proofs   of these results, though tedious,  are straightforward to  establish noting that $(z_\delta-\nu)<0$ and   $(B+z_\delta \cdot D)\leq 0$ over ${\cal B}$; the details are omitted. \end{proof}

}
\end{document}